\documentclass[conference,letterpaper]{IEEEtran}
\addtolength{\topmargin}{9mm}

\usepackage[utf8]{inputenc} 
\usepackage[T1]{fontenc}
\usepackage{url}
\usepackage{ifthen}
\usepackage{cite}
\usepackage[cmex10]{amsmath} 
\usepackage{cite}
\usepackage{amssymb,amsfonts}
\usepackage{algorithmic}
\usepackage{graphicx}
\usepackage{textcomp}
\usepackage{xcolor}
\usepackage{epstopdf}
\usepackage{diagbox}
\usepackage{amsthm}
\usepackage{float}
\usepackage{subfigure}
\newtheorem{Le}{Lemma}
\newtheorem{theo}{Theorem}                             

\interdisplaylinepenalty=2500 

\hyphenation{op-tical net-works semi-conduc-tor}

\begin{document}
\title{On the Weight Spectrum of Pre-Transformed Polar Codes} 

 \author{%
   \IEEEauthorblockN{Yuan Li\IEEEauthorrefmark{1}\IEEEauthorrefmark{2}\IEEEauthorrefmark{3},
                     Huazi Zhang\IEEEauthorrefmark{1},
                     Rong Li\IEEEauthorrefmark{1},
                     Jun Wang\IEEEauthorrefmark{1},
                     Guiying Yan\IEEEauthorrefmark{2}\IEEEauthorrefmark{3},
                     and Zhiming Ma\IEEEauthorrefmark{2}\IEEEauthorrefmark{3}}
   \IEEEauthorblockA{\IEEEauthorrefmark{1}%
                     Huawei Technologies Co. Ltd.}
  \IEEEauthorblockA{\IEEEauthorrefmark{2}%
                     University of Chinese Academy of Sciences}
   \IEEEauthorblockA{\IEEEauthorrefmark{3}%
                     Academy of Mathematics and Systems Science, CAS }
    Email: liyuan181@mails.ucas.ac.cn, \{zhanghuazi, lirongone.li, justin.wangjun\}@huawei.com,\\
           yangy@amss.ac.cn, mazm@amt.ac.cn 

 }

\maketitle

\begin{abstract}
Polar codes are the first class of channel codes achieving the symmetric capacity of the binary-input discrete memoryless channels (B-DMC) with efficient encoding and decoding algorithms. But the weight spectrum of polar codes is relatively poor compared to Reed-Muller (RM) codes, which degrades their maximum-likehood (ML) performance. Pre-transformation with an upper-triangular matrix (including cyclic redundancy check (CRC), parity-check (PC) and polarization-adjusted convolutional (PAC) codes), improves weight spectrum while retaining polarization. In this paper, the weight spectrum of upper-triangular pre-transformed polar codes is mathematically analyzed. In particular, we focus on calculating the number of low-weight codewords due to their impact on error-correction performance. Simulation results verify the accuracy of the analysis.
\end{abstract}

\section{Introduction}

Polar codes \cite{b1}, invented by Ar{\i}kan, are a great break through in coding theory. As code length $N = 2^n$ approaches infinity, the synthesized channels become either noiseless or pure-noise, and the fraction of the noiseless channels approaches channel capacity. Thanks to channel polarization, efficient  successive cancellation (SC) decoding algorithm can be implemented with a complexity of $O(NlogN)$. However, the performance of polar codes under SC decoding is poor at short to moderate block lengths.

In \cite{b2}, a successive cancellation list (SCL) decoding algorithm was proposed. As the list size $L$ increases, the performance of SCL decoding approaches that of ML decoding. But the ML performance of polar codes is still inferior due to low minimum distance. Consequently, concatenation of polar codes with CRC \cite{b3} and PC \cite{b4} were proposed to improve weight spectrum.
 Recently, Ar{\i}kan proposed polarization-adjusted convolutional (PAC) codes \cite{b5}, which is shown to approach binary input additive white Gaussian noise (BIAWGN) dispersion bound \cite{b7} under large list decoding\cite{b8}.

CRC-Aided (CA) polar, PC-polar, and PAC codes can be viewed as pre-transformed polar codes with upper-triangular transformation matrices\cite{b8}. In\cite{b9}, it is proved that any pre-transformation with an upper-triangular matrix does not reduce the minimum Hamming weight, and a properly designed pre-transformation can reduce the number of minimum-weight codewords. In this paper, we propose an efficient method to calculate the average weight spectrum of pre-transformed polar codes. Moreover, the method holds for arbitrary information sub-channel selection criteria, thus covers polar codes, RM codes and is not constrained by "partial order"\cite{b6}. Our results confirm that the pre-transformation with an upper-triangular matrix can reduce the number of minimum-weight codewords significantly when the information set is properly chosen. In the meantime, it enhances error-correcting performance of SCL decoding.

In section II, we review polar codes and pre-transformed polar codes. In section III we propose a formula to calculate the average weight spectrum of pre-transformation polar codes. In section IV the simulation results are presented to verify the accuracy of the formula. Finally we draw some conclusions in section V.

\section{Background}

\subsection{Polar Code}

Given a B-DMC $W: \{ 0,1 \} \rightarrow \mathcal{Y}$,  the channel transition
probabilities are defined as $W(y|x)$ where $y \in \mathcal{Y} , x \in \{ 0,1 \}$. $W$ is said to be symmetric if there is a permutation $\pi$, such that $ \forall$ $y \in \mathcal{Y}$, $W(y|1)=W(\pi(y)|0)$ and $\pi^2 = id$.

Then the symmetric capacity and the Bhattacharyya parameter of  $W$ are defined as
\begin{equation}
I(W) \triangleq \sum_{y \in \mathcal{Y}} \sum_{x \in \mathcal{X}} \frac{1}{2} W(y \mid x) \log \frac{W(y \mid x)}{\frac{1}{2} W(y \mid 0)+\frac{1}{2} W(y \mid 1)},
\end{equation}
and
\begin{equation}
Z(W) \triangleq \sum_{y \in \mathcal{Y}} \sqrt{W(y \mid 0) W(y \mid 1)}.
\end{equation}

Let
$
F=\left[\begin{array}{ll}
1 & 0 \\
1 & 1
\end{array}\right]
$, $N=2^m$, and $F_N=F^{\otimes m}$. Starting from $N = 2^m$ independent channels $W$, we obtain $N$ polarized channels $W_N^{(i)}$, after channel combining and splitting operations \cite{b1}, where
\begin{equation}
W_{N}\left(y_{1}^{N}|u_1^N \right) \triangleq W^N \left(y_1^N|u_1^NF_N\right),
\end{equation}
\begin{equation}
W_{N}^{(i)}\left(y_{1}^{N}, u_{1}^{i-1} \mid u_{i}\right) \triangleq \sum_{u_{i+1}^{N} \in \mathcal{X}^{N-i}} \frac{1}{2^{N-1}} W_{N}\left(y_{1}^{N} \mid u_{1}^{N}\right).
\end{equation}

Polar codes can be constructed by selecting the indices of $K$ information sub-channels, denoted by the information set $\mathcal{A} = \left\{ I_1,I_2,\dots,I_K \right\}$. The optimal sub-channel selection criterion for SC decoding is reliability, i.e., selecting the $K$ most reliable sub-channel as information set. Density evolution (DE) algorithm\cite{b10}, Gaussian approximation (GA) algorithm\cite{b11} and the channel-independent polarization weight (PW) construction algorithm\cite{b12} are efficient methods to find reliable sub-channels. The optimal sub-channel selection criterion for SCL decoding is still an open problem. Some heuristic approaches cosider both reliability and row weight to improve minimum code distance. 

After determining the information set $\mathcal{A}$, the complement set $\mathcal{A}^c$ is called the frozen set. Let 
$u_1^N=(u_1,u_2, \dots , u_N)$ be the bit sequence to be encoded. The information bits are inserted into $u_{\mathcal{A}}$,
and all zeros are filled into $u_{\mathcal{A}^c}$. Then the codeword $x_1^N$ is obtained by $x_1^N=u_1^NF_N$.

\subsection{Weight Spectrum of Polar Codes}
There are several prior works to obtain the weight spectrum of polar codes. In \cite{b15}, the authors use SCL decoding with a large list size to decode an all-zeros codeword. Codewords within the list are enumerated to estimate the number of low-weight codewords. In \cite{r1}, this approach is improved in term of memory usage. The above methods only obtain partial weight spectrum. In \cite{r2}\cite{r3}, probabilistic computation methods are proposed to estimate the weight spectrum of polar codes.

\subsection{Weight Spectrum of Polar Cosets}
As in \cite{r4}, let $u_1^{i-1} \in \{0,1\}^{i-1}$, $u_i \in \{0,1\}$, define the polar coset $C_N^{(i)}\left(u_1^{i-1},u_i \right)$ as
$$C_N^{(i)}\left(u_1^{i-1},u_i \right)=\left\{ (u_1^i,u')F_N|u' \in \{0,1\}^{n-i} \right\}.$$

In \cite{r5}\cite{r6}, recursive formulas are proposed to efficiently compute the weight spectrum of $C_N^{(i)}\left(0_1^{i-1},1 \right)$. The weight spectrum of $C_N^{(i)}\left(0_1^{i-1},1 \right)$ is tightly associated with the performance of SC decoding, our analysis of average weight spectrum of pre-transformed polar codes is based on the polar coset spectrum as well.

\subsection{Pre-Transformed Polar Codes}

\begin{equation*}
T
=\begin{bmatrix}
1  &  T_{12}  & \cdots\ &T_{1N}\\
0  &  1  & \cdots\ & T_{2N}\\
 \vdots   & \vdots & \ddots  & \vdots  \\
 0 & 0  & \cdots\ & 1\\
\end{bmatrix}
\end{equation*}

The above non-degenerate upper-triangular pre-transformation matrix $T$ has all ones on the main diagonal. Let $G_N = TF_N$ and $u_{\mathcal{A}^c}=\textbf{0}$, the codeword of the pre-transformed polar codes is given by $x_1^N=u_1^NG_N=u_1^NTF_N$. In original polar codes, the frozen bits are fixed to be zeros. While in pre-transformed polar codes, the frozen bits are linear combination of previous information bits.

\section{Average Code Spectrum
Analysis}

In this section, we propose a formula to compute the average weight spectrum of the pre-transformed polar codes, with focus on the number of low-weight codewords. The average number assumes that $T_{ij}$, $1\leq i < j \leq N$ are $i.i.d.$  $Bernoulli(\frac{1}{2})$ $r.v.$.

\subsection{Notations and Definitions}

$f_N^{(i)}$ is the $i$-$th$ row vector of $F_N$, and $g_N^{(i)}$ is the $i$-$th$ row vector of $G_N$. The number of codewords with Hamming weight $d$ of the pre-transformed polar codes is denoted by $N_d(\mathcal{A},T)$. The minimum distance of  polar/RM codes and the pre-transformed codes are denoted by $d_{min}(\mathcal{A})$ and $d_{min}(\mathcal{A},T)$, respectively. The number of minimum-weight codewords of polar/RM codes and the pre-transformed codes are denoted by $N_{min}(\mathcal{A})$ and $N_{min}(\mathcal{A},T)$, respectively.

\subsection{Code Spectrum Analysis}

The expected number of codewords with Hamming weight $d$ is 

\begin{align}
\label{eq:1}
&E\left[ N_d(\mathcal{A},T)\right] \notag \\ 
&= \sum_{u_{I_1},u_{I_2},\dots,u_{I_K}\in \{0,1\}^K} P\left(w \left(\oplus \sum_{i=1}^K u_{I_i} g_N^{(I_i)} \right)=d \right) \notag \\
&= \sum_{j=1}^K \sum_{\mbox{\tiny$\begin{array}{c}
u_1,\dots,u_{I_{j-1}}=0\\
u_{I_j}=1\\
u_{I_{j+1}},\dots,u_{I_K} \in \left\{0,1\right\}^{K-j}\end{array}$}} \notag \\
&P \left(w\left(g_N^{(I_j)}\oplus \sum_{i=j+1}^K u_{I_i} g_N^{(I_i)}\right)=d\right).
\end{align}
The expectation is with respect to $T$, so $E\left[ N_d(\mathcal{A},T)\right]$ is a function of $\mathcal{A}$.

\begin{Le}
${\forall}$ $u_{I_{j+1}},\dots,u_{I_K} \in \left\{0,1\right\}^{K-j}$,
$$P \left(w\left(g_N^{(I_j)}\oplus \sum_{i=j+1}^K u_{I_i} g_N^{(I_i)}\right)=d \right)=P \left(w\left(g_N^{(I_j)}\right)=d \right).$$
\end{Le}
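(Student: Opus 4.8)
The plan is to prove the slightly stronger statement that the entire random vector $g_N^{(I_j)}\oplus\sum_{i=j+1}^{K}u_{I_i}g_N^{(I_i)}$ is equal in distribution to $g_N^{(I_j)}$; the equality of the two weight distributions is then immediate. The mechanism is to condition on everything except row $I_j$ of $T$ and then exploit the randomness that still remains in that row.

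First I would make the dependence on $T$ explicit. Since $T$ is upper-triangular with unit diagonal, the $i$-th row of $G_N=TF_N$ is $g_N^{(i)}=f_N^{(i)}\oplus\sum_{k>i}T_{ik}f_N^{(k)}$; in particular $g_N^{(I_j)}=f_N^{(I_j)}\oplus R$, where $R:=\sum_{k>I_j}T_{I_j,k}f_N^{(k)}$ depends only on row $I_j$ of $T$. Because $\mathcal{A}$ is indexed in increasing order, $I_i>I_j$ for every $i>j$, so each $g_N^{(I_i)}$ with $i>j$ is a linear combination of the $f_N^{(k)}$ with $k>I_j$. Hence, for any fixed $u_{I_{j+1}},\dots,u_{I_K}$, the vector $v:=\sum_{i=j+1}^{K}u_{I_i}g_N^{(I_i)}$ lies in the subspace $S:=\mathrm{span}\{f_N^{(k)}:k>I_j\}$ and depends only on rows $I_{j+1},\dots,I_K$ of $T$, so it is independent of $R$.

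Next I would record two elementary facts. Since $F_N$ is invertible its rows are linearly independent, so $\{f_N^{(k)}:k>I_j\}$ is a basis of $S$, and as the coefficients $T_{I_j,k}$ are i.i.d.\ $\mathrm{Bernoulli}(1/2)$, $R$ is uniformly distributed on $S$. Now condition on $v$: because $v\in S$, the coset $v\oplus S$ equals $S$, so $R\oplus v$ is again uniform on $S$, i.e.\ conditionally on $v$ it has the same law as $R$. Adding the deterministic vector $f_N^{(I_j)}$ shows that $g_N^{(I_j)}\oplus v=f_N^{(I_j)}\oplus(R\oplus v)$ has, conditionally on $v$, the same law as $f_N^{(I_j)}\oplus R=g_N^{(I_j)}$; in particular $w(g_N^{(I_j)}\oplus v)\mid v$ and $w(g_N^{(I_j)})$ are equidistributed. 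Since the latter does not depend on $v$, averaging over $v$ gives $P\big(w(g_N^{(I_j)}\oplus v)=d\big)=P\big(w(g_N^{(I_j)})=d\big)$, which is the claim.

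The \textbf{main obstacle} — in fact the only point that needs an idea — is recognizing that the randomness of row $I_j$ of $T$ alone already smears $g_N^{(I_j)}$ uniformly over the coset $f_N^{(I_j)}\oplus S$, and that every admissible perturbation $v$ built from later rows stays inside $S$ and is therefore absorbed by that coset. The remaining ingredients — writing out $TF_N$, using upper-triangularity together with the increasing order of $\mathcal{A}$, and the bijection between $\{0,1\}^{N-I_j}$ and $S$ — are routine.
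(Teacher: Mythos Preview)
Your proof is correct and follows essentially the same approach as the paper: both express $g_N^{(I_j)}\oplus\sum_{i>j}u_{I_i}g_N^{(I_i)}$ as $f_N^{(I_j)}\oplus\sum_{k>I_j}T'_{I_j,k}f_N^{(k)}$ and argue that the new coefficients are again i.i.d.\ $\mathrm{Bernoulli}(1/2)$, so the whole vector is equidistributed with $g_N^{(I_j)}$. Your phrasing---conditioning on the later rows of $T$ and observing that $R$ is uniform on $S$ so that adding $v\in S$ is absorbed---is a slightly cleaner packaging of exactly the same mechanism the paper spells out coordinate-wise.
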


\begin{proof}According to the pre-transformation matrix,
\begin{equation*}
g_N^{(I_j)}=f_N^{(I_j)}\oplus \sum_{i=I_j+1}^N T_{I_ji} f_N^{(i)},
\end{equation*}
$$g_N^{(I_j)}\oplus \sum_{i=j+1}^K u_{I_i} g_N^{(I_i)}=f_N^{(I_j)}\oplus \sum_{i=I_j+1}^N T'_{I_ji} f_N^{(i)}.$$
And
\begin{equation*}    T'_{I_ji} \triangleq
 \begin{cases}
    \sum_{I_k < i,u_{I_k}=1}T_{I_ki}   & i \notin \left[I_{j+1},\dots,I_K \right] \\
    \sum_{I_k < i,u_{I_k}=1}T_{I_ki} \oplus u_i & i \in \left[I_{j+1},\dots,I_K \right].
\end{cases}
\end{equation*}

It is straightforward to see that when $T_{I_ji}$ are $i.i.d.$ $Bernoulli(\frac{1}{2})$ $r.v.$, $T'_{I_ji}$ are $i.i.d.$ $Bernoulli(\frac{1}{2})$ $r.v.$ as well.

As a result, $g_N^{(I_j)}$ and $g_N^{(I_j)}\oplus \sum\limits_{i=j+1}^K u_{I_i} g_N^{(I_i)}$ follow the same distribution too, ${\forall}$ $u_{I_{j+1}},\dots,u_{I_K} \in \left\{0,1\right\}^{K-j}$.
Therefore \textbf{Lemma 1} holds.
\end{proof}

\begin{Le}
If $w(f^{(I_j)}_N) > d$, $P \left(w(g_N^{(I_j)})=d \right)=0$.
\end{Le}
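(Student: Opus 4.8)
The plan is to exploit the recursive/triangular structure of $F_N$ together with the support structure of the rows $f_N^{(i)}$. Recall from the proof of Lemma 1 that $g_N^{(I_j)} = f_N^{(I_j)} \oplus \sum_{i=I_j+1}^N T_{I_j i} f_N^{(i)}$. The crucial observation is that $F_N = F^{\otimes m}$ has a nested structure: for any index $i$, the row $f_N^{(i)}$ has its first nonzero entry in a position determined by $i$, and all subsequent rows $f_N^{(i')}$ with $i' > i$ have their leading nonzero entry at the same position or later. More precisely, I would argue that the leftmost nonzero coordinate of $f_N^{(I_j)}$ is \emph{strictly to the left} of the leftmost nonzero coordinate of every $f_N^{(i)}$ with $i > I_j$; equivalently, in the column where $f_N^{(I_j)}$ first becomes nonzero, all later rows $f_N^{(i)}$ vanish. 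This is a standard fact about $F^{\otimes m}$, provable by induction on $m$ using the block structure $F_N = \begin{bmatrix} F_{N/2} & 0 \\ F_{N/2} & F_{N/2} \end{bmatrix}$.

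Granting that structural fact, the argument is short. First I would fix the column position $p$ where $f_N^{(I_j)}$ has its leading $1$; by the observation, every $f_N^{(i)}$ with $i > I_j$ has a $0$ in position $p$, so the linear combination $\sum_{i=I_j+1}^N T_{I_j i} f_N^{(i)}$ also has a $0$ in position $p$, and hence $g_N^{(I_j)}$ has a $1$ in position $p$ regardless of the values of the $T_{I_j i}$. More generally I would want: the support of $g_N^{(I_j)}$ always \emph{contains} the support of $f_N^{(I_j)}$ — not merely at the single leading coordinate but at every coordinate — because the $F^{\otimes m}$ structure gives that wherever $f_N^{(I_j)}$ is nonzero, every later row is zero. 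Actually the cleaner and sufficient route: show $\mathrm{supp}(f_N^{(I_j)}) \subseteq \mathrm{supp}(g_N^{(I_j)})$ deterministically, which immediately gives $w(g_N^{(I_j)}) \geq w(f_N^{(I_j)})$ with probability $1$. Then if $w(f_N^{(I_j)}) > d$, the event $\{w(g_N^{(I_j)}) = d\}$ is empty, so its probability is $0$.

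To establish $\mathrm{supp}(f_N^{(I_j)}) \subseteq \mathrm{supp}(g_N^{(I_j)})$, I would use the well-known fact (e.g.\ from the analysis of polar coset weights, or directly from the Kronecker structure) that for $i > I_j$, $\mathrm{supp}(f_N^{(i)})$ either is disjoint from $\mathrm{supp}(f_N^{(I_j)})$ in a way that leaves $f_N^{(I_j)}$'s leading segment untouched, or more strongly that $f_N^{(i)}$ restricted to $\mathrm{supp}(f_N^{(I_j)})$ is $0$. The sharpest statement I would try to prove by induction on $m$ is: if $i > i'$ then $\mathrm{supp}(f_N^{(i)}) \cap \mathrm{supp}(f_N^{(i')}) \neq \mathrm{supp}(f_N^{(i')})$ is too weak; instead I claim $f_N^{(i)}$ vanishes on the coordinate block that forms the "left half" of $\mathrm{supp}(f_N^{(i')})$, and recursing shows $f_N^{(i)}$ vanishes on all of $\mathrm{supp}(f_N^{(i')})$ when $i > i'$ — wait, that is false in general (e.g.\ two rows of the same RM order overlap). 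The correct and still-sufficient claim is only about the \emph{leading} coordinate: the minimum of $\mathrm{supp}(f_N^{(i)})$ is strictly increasing in $i$ within each "layer," but across layers one needs the Kronecker argument. For the purposes of this lemma the single leading-coordinate version suffices if combined with an induction that descends into the sub-block containing $\mathrm{supp}(f_N^{(I_j)})$; alternatively, one invokes that $g_N^{(I_j)}$ agrees with $f_N^{(I_j)}$ on a nonempty prefix and a recursive descent shows agreement on the whole support of $f_N^{(I_j)}$.

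The main obstacle I anticipate is making the support-containment claim both \emph{true} and cleanly provable: the naive "later rows vanish on the support of earlier rows" is false, so the induction must be set up carefully to descend into the correct tensor sub-block (the block where $f_N^{(I_j)}$ lives) and track that all rows indexed after $I_j$ restrict to rows of strictly higher "local index" or to zero there. Once the right inductive invariant is identified, the probabilistic conclusion is immediate since the event in question becomes the empty set. I would therefore spend most of the effort on the deterministic lemma $w(g_N^{(I_j)}) \geq w(f_N^{(I_j)})$ and treat the passage to $P(\cdot) = 0$ as a one-line corollary.
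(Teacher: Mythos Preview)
Your overall strategy---reduce to the deterministic inequality $w(g_N^{(I_j)}) \geq w(f_N^{(I_j)})$ and then observe that the event has probability zero---is exactly what the paper does. The paper, however, does not prove that inequality itself; it simply invokes \cite[Corollary~1]{b9}. Your proposal differs in that you try to supply a self-contained proof of the inequality, and that is where there is a genuine gap.

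Your main proposed mechanism, the support-containment claim $\mathrm{supp}(f_N^{(I_j)}) \subseteq \mathrm{supp}(g_N^{(I_j)})$, is false, and you should abandon it entirely rather than try to rescue it. A concrete counterexample at $N=4$: with $I_j=2$ we have $f_4^{(2)}=(1,1,0,0)$, and taking $T_{23}=1$, $T_{24}=0$ gives $g_4^{(2)}=f_4^{(2)}\oplus f_4^{(3)}=(0,1,1,0)$. The weight is still $2$, but the first coordinate of $f_4^{(2)}$ has been erased, so the support is not preserved. Your ``leading coordinate'' fallback is also false as stated: at $N=4$ with $I_j=1$, adding $f_4^{(2)}$ kills the leading $1$ of $f_4^{(1)}$. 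Since neither support containment nor the leading-coordinate claim holds, and your proposal never identifies a replacement invariant, the argument as written does not go through.

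The clean inductive proof of $w(g_N^{(I_j)}) \geq w(f_N^{(I_j)})$ uses the Plotkin decomposition that the paper itself exploits in the appendix. If $I_j \le 2^{m-1}$, write $g_{2^m}^{(I_j)} = [\mathbf{X}\oplus\mathbf{Y},\mathbf{Y}]$ as in the proof of Theorem~1; then $w(g_{2^m}^{(I_j)}) = w(\mathbf{X}) + 2(w(\mathbf{Y})-c) \ge w(\mathbf{X})$, and $\mathbf{X}$ has the form $f_{2^{m-1}}^{(I_j)}\oplus(\text{combination of later rows of }F_{2^{m-1}})$, so induction gives $w(\mathbf{X})\ge w(f_{2^{m-1}}^{(I_j)}) = w(f_{2^m}^{(I_j)})$. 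If $I_j > 2^{m-1}$, then $g_{2^m}^{(I_j)} = [\mathbf{Z},\mathbf{Z}]$ and $w(g_{2^m}^{(I_j)}) = 2\,w(\mathbf{Z}) \ge 2\,w(f_{2^{m-1}}^{(I_j-2^{m-1})}) = w(f_{2^m}^{(I_j)})$ by induction. This is the ``descend into the tensor sub-block'' idea you allude to, but the invariant is the weight inequality itself, not any statement about supports.
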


\begin{proof}
Recall that
$$g_N^{(I_j)}=f_N^{(I_j)}\oplus \sum_{i=I_j+1}^N T_{I_ji} f_N^{(i)}.$$
According to \cite[Corollary 1]{b9},
$$w\left(g_N^{(I_j)}\right) \geq w\left(f_N^{(I_j)} \right)>d,$$
therefore
$$P \left(w\left(g_N^{(I_j)}\right)=d \right)=0.$$
\end{proof}

According to \textbf{Lemma 1} and \textbf{Lemma 2}, (\ref{eq:1}) can be further simplified to

\begin{align}
\label{eq:2}
E\left[ N_d(\mathcal{A},T)\right] = \sum_{\mbox{\tiny$\begin{array}{c}
1 \leq j \leq K\\
w(f_{I_j}) \leq d\end{array}$}}2^{K-j}P \left(w\left(g_N^{(I_j)}\right)=d \right).
\end{align}

Let $P(m,i,d) \triangleq P\left(w\left(g_{2^m}^{(i)}\right)=d\right)$ , (\ref{eq:2}) can be rewritten as

\begin{align}
\label{eq:4}
E\left[ N_d(\mathcal{A},T)\right] = \sum_{\mbox{\tiny$\begin{array}{c}
1 \leq j \leq K\\
w(f_{I_j}) \leq d\end{array}$}}2^{K-j}P \left(m,I_j,d \right).
\end{align}

In particular, let $P(m,i) \triangleq P\left(w\left(g_{2^m}^{(i)}\right)=w\left(f_{2^m}^{(i)}\right)\right)$. So if $d=d_{min}$, (\ref{eq:2}) can be rewritten as\\
$E\left[ N_{min}(\mathcal{A},T)\right] =$
\begin{align}
\label{eq:5}
\sum_{\mbox{\tiny$\begin{array}{c}
1 \leq j \leq K\\
w(f_{I_j})=d_{min}(\mathcal{A})\end{array}$}}2^{K-j}P \left(m,I_j\right).
\end{align}

Let $A_{d}$ denote the number of codewords in $C_N^{(i)}\left(0_1^{i-1},1 \right)$ with Hamming weight $d$. Clearly, $2^{N-i}P(m,i)=A_{w\left(f_N^{(i)}\right)}$, $2^{N-i}P(m,i,d)=A_d$.  
In \cite{r5}\cite{r6}, the authors propose recursive formulas to calculate the weight spectrum of polar cosets. 

In \textbf{Theorem 1} and \textbf{Theorem 2}, we investigate the recursive fomulas for $P(m,i)$ and $P(m,i,d)$, which are similar to the formula in \cite{r6}. But instead of polar cosets, we are interested in the pre-transformed polar codes. For the completeness of the paper, the proofs are in the appendix.   

\begin{theo}
\begin{align}
\label{eq:8}
P(m,i)=
\begin{cases}
\frac{2^{w(f_{2^m}^{(i)})}}{2^{2^{m-1}}}P(m-1,i)& 1 \leq i \leq 2^{m-1}\\
P(m-1,i-2^{m-1})& 2^{m-1} < i \leq 2^m,
\end{cases}
\end{align}
with the boundary conditions $P(1,1)=P(1,2)=1$.
\end{theo}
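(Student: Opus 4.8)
The plan is to prove the identity by induction on $m$, exploiting the Kronecker block structure
\[
F_{2^m}=\begin{bmatrix} F_{2^{m-1}} & 0\\ F_{2^{m-1}} & F_{2^{m-1}} \end{bmatrix}
\]
together with the expansion $g_{2^m}^{(i)}=f_{2^m}^{(i)}\oplus\bigoplus_{k=i+1}^{2^m}T_{ik}f_{2^m}^{(k)}$, where the $T_{ik}$ are i.i.d.\ $\mathrm{Bernoulli}(1/2)$. I would split every length-$2^m$ row into its two length-$2^{m-1}$ halves. For $2^{m-1}<i\le 2^m$, write $i=2^{m-1}+i'$; every index $k>i$ also lies in the lower block, where $f_{2^m}^{(k)}=(f_{2^{m-1}}^{(k-2^{m-1})}\mid f_{2^{m-1}}^{(k-2^{m-1})})$, so both halves of $g_{2^m}^{(i)}$ coincide and equal a vector $v$ whose law is exactly that of $g_{2^{m-1}}^{(i')}$. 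Then $w(g_{2^m}^{(i)})=2w(v)$ and $w(f_{2^m}^{(i)})=2w(f_{2^{m-1}}^{(i')})$, which gives $P(m,i)=P\big(w(v)=w(f_{2^{m-1}}^{(i')})\big)=P(m-1,i-2^{m-1})$, the second branch.

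For $1\le i\le 2^{m-1}$ I would separate the rows $k$ with $i<k\le 2^{m-1}$ from those with $2^{m-1}<k\le 2^m$; one then obtains $g_{2^m}^{(i)}=(a\oplus b\mid b)$, where $a:=f_{2^{m-1}}^{(i)}\oplus\bigoplus_{i<k\le 2^{m-1}}T_{ik}f_{2^{m-1}}^{(k)}$ has the law of $g_{2^{m-1}}^{(i)}$, and $b:=\bigoplus_{k'=1}^{2^{m-1}}T_{i,2^{m-1}+k'}f_{2^{m-1}}^{(k')}$ is uniform on $\{0,1\}^{2^{m-1}}$ (since $F_{2^{m-1}}$ is invertible) and independent of $a$ (the two involve disjoint entries of $T$). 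Hence $w(g_{2^m}^{(i)})=w(a\oplus b)+w(b)$. The key step is the equality analysis: from $w(a\oplus b)=w(a)+w(b)-2|\mathrm{supp}(a)\cap\mathrm{supp}(b)|$ we get $w(a\oplus b)+w(b)=w(a)+2\big(w(b)-|\mathrm{supp}(a)\cap\mathrm{supp}(b)|\big)\ge w(a)$, with equality iff $\mathrm{supp}(b)\subseteq\mathrm{supp}(a)$; and by \cite[Corollary 1]{b9} (as invoked in \textbf{Lemma 2}) $w(a)\ge w(f_{2^{m-1}}^{(i)})=w(f_{2^m}^{(i)})$. Consequently the event $\{w(g_{2^m}^{(i)})=w(f_{2^m}^{(i)})\}$ equals $\{w(a)=w(f_{2^m}^{(i)})\}\cap\{\mathrm{supp}(b)\subseteq\mathrm{supp}(a)\}$.

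To close, I would condition on $a$: given $w(a)=\ell$ there are $2^{\ell}$ vectors $b\in\{0,1\}^{2^{m-1}}$ supported inside $\mathrm{supp}(a)$, so $P(\mathrm{supp}(b)\subseteq\mathrm{supp}(a)\mid a)=2^{w(a)}/2^{2^{m-1}}$; taking expectations over the sub-event $\{w(a)=w(f_{2^m}^{(i)})\}$ — which has probability $P(m-1,i)$ because $a$ is distributed as $g_{2^{m-1}}^{(i)}$ — yields $P(m,i)=\big(2^{w(f_{2^m}^{(i)})}/2^{2^{m-1}}\big)P(m-1,i)$, the first branch. The base case $m=1$ is direct: $g_2^{(2)}=f_2^{(2)}$ is deterministic, so $P(1,2)=1$, and $g_2^{(1)}=f_2^{(1)}\oplus T_{12}f_2^{(2)}\in\{(1,0),(0,1)\}$ always has weight $1=w(f_2^{(1)})$, so $P(1,1)=1$. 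I expect the only genuinely delicate point to be the equality analysis of the second case — establishing that the two conditions ``$w(a)$ is extremal'' and ``$\mathrm{supp}(b)\subseteq\mathrm{supp}(a)$'' are jointly necessary and sufficient, and that, after conditioning on $a$, they factor into the claimed product; the remainder is bookkeeping on the Kronecker recursion.
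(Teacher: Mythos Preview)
Your proposal is correct and follows essentially the same approach as the paper's proof: your vectors $a$ and $b$ are exactly the paper's $\mathbf{X}$ and $\mathbf{Y}$, the decomposition $g_{2^m}^{(i)}=(a\oplus b\mid b)$ matches the paper's $[\mathbf{X}\oplus\mathbf{Y},\mathbf{Y}]$, and the equality analysis $w(a)+2(w(b)-c)=w(f_{2^m}^{(i)})\Leftrightarrow\{w(a)=w(f_{2^m}^{(i)})\}\cap\{\mathrm{supp}(b)\subseteq\mathrm{supp}(a)\}$ followed by conditioning on $a$ is exactly the paper's argument. If anything, you are slightly more explicit about why $b$ is uniform (invertibility of $F_{2^{m-1}}$) and why $a,b$ are independent (disjoint entries of $T$), points the paper leaves implicit.
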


With (\ref{eq:5}) and (\ref{eq:8}), we can recursively calculate the average number of minimum-weight codewords. We are also interested in other low-weight codewords on the weight spectrum, since together they determine the ML performance at high SNR. The problem boils down to evaluating the more general formula of $P(m,i,d)$. As we will see in \textbf{Theorem 2}, the average weight spectrum can be calculated efficiently in the same recursive manner especially for codewords with small Hamming weight.

\begin{theo}
If $1 \leq i \leq 2^{m-1}$
\begin{align}
\label{eq:r1}
P(m,i,d)=\sum\limits_{\mbox{\tiny$\begin{array}{c}
d'=w\left(f_{2^m}^{(i)}\right)\\
d-d' is \ even \end{array}$}}^{d}P(m-1,i,d')\frac{2^{d'}\begin{pmatrix} 2^{m-1}-d' \\ \frac{d-d'}{2} \end{pmatrix}}{2^{2^{m-1}}}.
\end{align}
If \ $2^{m-1} < i \leq 2^m$
\begin{align}
\label{eq:r2}
P(m,i,d)=\begin{cases}P(m-1,i-2^{m-1},d/2) &d \ is \ even\\
0  &d \ is\ odd,
\end{cases}
\end{align}
with the boundary conditions $P(1,1,1)=P(1,2,2)=1$. And 
\begin{align}
\label{eq:6}
\begin{cases}
P(m,1,d)=0,&\ if \ d \ is \ even  \\ 
P(m,i,d)=0,&\ if \ i > 1 \ and \ d \ is \ odd.
\end{cases}
\end{align}
\end{theo}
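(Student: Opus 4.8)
The plan is to prove Theorem 2 by exploiting the recursive structure of $F_{2^m} = F^{\otimes m}$, exactly as in the proof of Theorem 1, but tracking the full weight distribution of the random vector $g_{2^m}^{(i)}$ rather than only the probability that its weight equals $w(f_{2^m}^{(i)})$. First I would recall the standard block decomposition: any row $f_{2^m}^{(i)}$ with $1 \le i \le 2^{m-1}$ has the form $(f_{2^{m-1}}^{(i)}, f_{2^{m-1}}^{(i)})$ (two identical halves), while a row with $2^{m-1} < i \le 2^m$ has the form $(0, f_{2^{m-1}}^{(i-2^{m-1})})$ (left half zero). Since $g_{2^m}^{(i)} = f_{2^m}^{(i)} \oplus \sum_{k>i} T_{ik} f_{2^m}^{(k)}$, I would split the sum over $k$ into indices in $(i, 2^{m-1}]$ and indices in $(2^{m-1}, 2^m]$, and write $g_{2^m}^{(i)} = (L, R)$ in terms of two halves.

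The key computation is the case $1 \le i \le 2^{m-1}$. Here the left half is $L = f_{2^{m-1}}^{(i)} \oplus \sum_{i < k \le 2^{m-1}} T_{ik} f_{2^{m-1}}^{(k)}$, which is distributed exactly as $g_{2^{m-1}}^{(i)}$; call its weight $d'$, so $\Pr[d'] = P(m-1,i,d')$. The right half is $R = L \oplus V$, where $V = \sum_{2^{m-1} < k \le 2^m} T_{ik}\, (\text{lower half of } f_{2^m}^{(k)})$ is a sum of $2^{m-1}$ i.i.d.\ Bernoulli$(1/2)$-weighted rows of $F_{2^{m-1}}$; since $F_{2^{m-1}}$ is invertible, $V$ is uniform on $\{0,1\}^{2^{m-1}}$ and independent of $L$. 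Given $w(L) = d'$, the weight of $(L, L\oplus V)$ is $d' + w(L \oplus V)$; conditioning on $V$ being uniform, $w(L\oplus V)$ takes value $w(L) + (\text{\# zero-coords of }L \text{ flipped}) - (\text{\# one-coords flipped})$, and a short counting argument shows $w(g_{2^m}^{(i)}) = d' + (d' + 2t) = 2d' + 2t$ where $t$ counts zero-positions of $L$ flipped to one — wait, more carefully: the total weight is $w(L) + w(L\oplus V)$ and $w(L) + w(L \oplus V) \equiv 0 \pmod 2$ is not automatic, so I must instead note $w(L \oplus V) = d' - a + b$ where $a \le d'$ ones are cleared and $b \le 2^{m-1}-d'$ zeros are set, total weight $2d' - a + b = d$; but the cleaner route, matching the stated formula, is to observe that the total weight is $d'$ (from $L$) plus the number of coordinates where $V$ differs appropriately, giving $d = d' + (\text{weight contributed on the right})$ with the right-half weight being $d'$ plus an even shift $2 \cdot \frac{d-2d'}{2}$; I would organize this so that $d - d'$ is even and the number of ways to choose which of the $2^{m-1}-d'$ zero-coordinates of $L$ get set is $\binom{2^{m-1}-d'}{(d-d')/2}$, while the $d'$ one-coordinates each independently may or may not flip, contributing the factor $2^{d'}$, all over the $2^{2^{m-1}}$ equally likely values of $V$. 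Summing over $d' \ge w(f_{2^{m-1}}^{(i)}) = w(f_{2^m}^{(i)})$ gives \eqref{eq:r1}.

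The case $2^{m-1} < i \le 2^m$ is easy: the left half of $g_{2^m}^{(i)}$ is identically zero (every row involved has zero left half), and the right half is distributed as $g_{2^{m-1}}^{(i-2^{m-1})}$, so $w(g_{2^m}^{(i)}) = w(g_{2^{m-1}}^{(i-2^{m-1})})$ is always even and equals $d$ iff the half has weight $d/2$, giving \eqref{eq:r2}. The boundary conditions at $m=1$ are immediate since $g_2^{(1)} = (1, 1\oplus T_{12})$ with $T_{12}\sim\text{Bernoulli}(1/2)$ has weight $1$ with probability... actually $w = 1 + (1 \oplus T_{12})$, hmm, so $P(1,1,1)$ should be reconsidered — I would check this directly; and $g_2^{(2)} = (0,1)$ has weight $2$ deterministically, so $P(1,2,2)=1$. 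Finally \eqref{eq:6}: $g_{2^m}^{(1)}$ always has two equal halves, hence even weight, so $P(m,1,d)=0$ for $d$ even is \emph{wrong as I read it} — I would instead read \eqref{eq:6} as $P(m,1,d)=0$ when $d$ is \emph{odd} being false too, so I must look again: in fact $g_{2^m}^{(1)} = (L, L\oplus V)$ need not have even weight, so the parity claims in \eqref{eq:6} come from a finer fact, namely $f_{2^m}^{(1)}$ is the all-ones row forcing $L$ to be the complement of a codeword pattern; I would derive \eqref{eq:6} from the recursions themselves by induction on $m$. The main obstacle I anticipate is getting the bookkeeping of the left/right weight split exactly right so that the combinatorial factor $2^{d'}\binom{2^{m-1}-d'}{(d-d')/2}/2^{2^{m-1}}$ emerges with the correct parity constraint; once that single lemma about the weight of $(L, L\oplus V)$ for uniform independent $V$ is pinned down, the rest is routine induction bootstrapped from the $m=1$ base case and Corollary 1 of \cite{b9}.
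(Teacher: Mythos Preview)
Your proposal has a concrete error at the very start that propagates everywhere: you have the block structure of $F_{2^m}=F^{\otimes m}$ reversed. With $F=\left[\begin{smallmatrix}1&0\\1&1\end{smallmatrix}\right]$ one gets
\[
F_{2^m}=\begin{bmatrix}F_{2^{m-1}}&0\\ F_{2^{m-1}}&F_{2^{m-1}}\end{bmatrix},
\]
so for $1\le i\le 2^{m-1}$ the row is $f_{2^m}^{(i)}=\bigl(f_{2^{m-1}}^{(i)},\,\mathbf{0}\bigr)$, and for $2^{m-1}<i\le 2^m$ it is $f_{2^m}^{(i)}=\bigl(f_{2^{m-1}}^{(i-2^{m-1})},\,f_{2^{m-1}}^{(i-2^{m-1})}\bigr)$, exactly the opposite of what you wrote. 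Consequently, in Case~1 the correct decomposition (as in the paper) is $g_{2^m}^{(i)}=[\mathbf{X}\oplus\mathbf{Y},\,\mathbf{Y}]$ with $\mathbf{X}\sim g_{2^{m-1}}^{(i)}$ and $\mathbf{Y}$ uniform and independent, not your $(L,\,L\oplus V)$. This matters: with the correct pair, the total weight is $w(\mathbf{X})+2\cdot\#\{j:\mathbf{X}_j=0,\ \mathbf{Y}_j=1\}$, which forces $d-d'$ even and immediately yields the factor $2^{d'}\binom{2^{m-1}-d'}{(d-d')/2}/2^{2^{m-1}}$. Under your $(L,\,L\oplus V)$ the conditional weight is $d'+w(L\oplus V)$ with $L\oplus V$ uniform, giving $\binom{2^{m-1}}{d-d'}/2^{2^{m-1}}$ with no parity constraint --- a different (and wrong) recursion. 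The ``bookkeeping obstacle'' you anticipate is not a bookkeeping issue; it is this structural mis-identification.

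The same reversal breaks Case~2 and the base case. For $i>2^{m-1}$ every row involved has two equal halves, so $g_{2^m}^{(i)}\sim\bigl[g_{2^{m-1}}^{(i-2^{m-1})},\,g_{2^{m-1}}^{(i-2^{m-1})}\bigr]$ and the weight is twice that of the half; this is why $d$ must be even and why $d/2$ appears. Your ``left half identically zero'' claim gives weight equal to (not twice) the half-weight, which is why your sentence there is self-contradictory. Likewise $f_2^{(1)}=(1,0)$ and $f_2^{(2)}=(1,1)$, so $g_2^{(1)}=(1\oplus T_{12},\,T_{12})$ always has weight~$1$ (hence $P(1,1,1)=1$), and $g_2^{(2)}=(1,1)$ has weight~$2$; your computations of both are off. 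Finally, $f_{2^m}^{(1)}$ is $(1,0,\ldots,0)$, not the all-ones row; \eqref{eq:6} follows in one line from the parity observation $w(a\oplus b)\equiv w(a)+w(b)\pmod 2$ together with $w(f_N^{(1)})=1$ and $w(f_N^{(j)})$ even for $j>1$, so $w(g_N^{(i)})\equiv w(f_N^{(i)})\pmod 2$ deterministically. Once you fix the block orientation, your outline coincides with the paper's proof.
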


\subsection{Complexity Analysis}
In this section, we consider the computational complexity of average weight spectrum of pre-transformed polar codes.

According to \textbf{Theorem 1} and (\ref{eq:5}), the computational complexity of average number of minimum-weight codewords is $O(NlogN)$.

Let $\chi(N)$ denote the worst case complexity of computing the average spectrum of pre-transformed polar codes with code length $N$. According to (\ref{eq:r1}) and (\ref{eq:r2}), $O(N)$ operations are required for computing each $P(m,i,d)$, after the computation of average spectrum with code length $\frac{N}{2}$. So the computation complexity of $P(m,i,d)$ over $1 \leq i \leq N$, $0 \leq d \leq N$ is $O(N^3)$. At last, (\ref{eq:4}) requires no more than $N$ calculation. In short, $\chi(N) \leq \chi(N/2) + O(N^3)$. Consequently, $\chi(N) = O(N^3)$.

\section{Simulation}

In this section, we verify the correctness of the recursive formula through simulations. In particular, we employ the "large list decoding" method described in \cite{b15} to collect low-weight codewords.  Transmit all-zero codeword without noise, and use list decoding to decode the channel output. With sufficiently large list size $L$, the decoder collects all the low-weight codewords. At first, we randomly generate one thousand pre-transfom matrices for RM(128, 64), and set $L=5\times 10^3$ to count the number of minimum-weight codewords for each matrix, and obtain their average $N_{min}$. The result is shown in Fig.\;\ref{fig1}: $d_{min}=16$, $N_{min}^{simulation}=2768.1$, $N_{min}^{recursion}=2766.9$.

To show that our recursive formula is applicable for any sub-channel selection criterion we also construct polar code(128, 64) by the PW algorithm \cite{b12}. The simulation result is shown in Fig.\;\ref{fig2}: $d_{min}=8$, $N_{min}^{simulation}=272.64$, $N_{min}^{recursion}=272$.

Our recursive formula is also applicable for longer codes. We set $L=5\times 10^4$ to count the number of minimum-weight codewords for one thousand pre-transformed RM(512, 256). The result is shown in Fig.\;\ref{fig6}: $d_{min}=32$, $N_{min}^{simulation}=1.5933\times 10^4$, $N_{min}^{recursion}=1.5936\times 10^4$.

As seen, the recursively calculated minimum-weight codeword numbers are very close to ones obtained through simulation. Furthermore, they show that the variance of the number of minimum-weight codeword is small. 
\begin{figure}[htbp]
\centerline{\includegraphics[width = .5\textwidth]{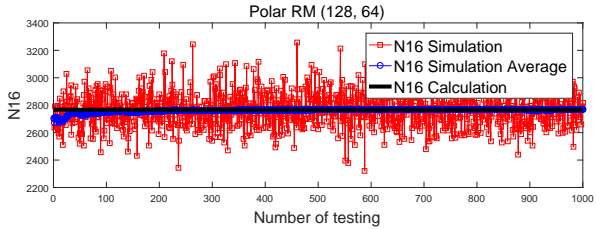}}
\caption{RM(128,64), the black solid line is calculated, and the blue solid line is obtained from simulation.}
\label{fig1}
\end{figure}

\begin{figure}[htbp]
\centerline{\includegraphics[width = .5\textwidth]{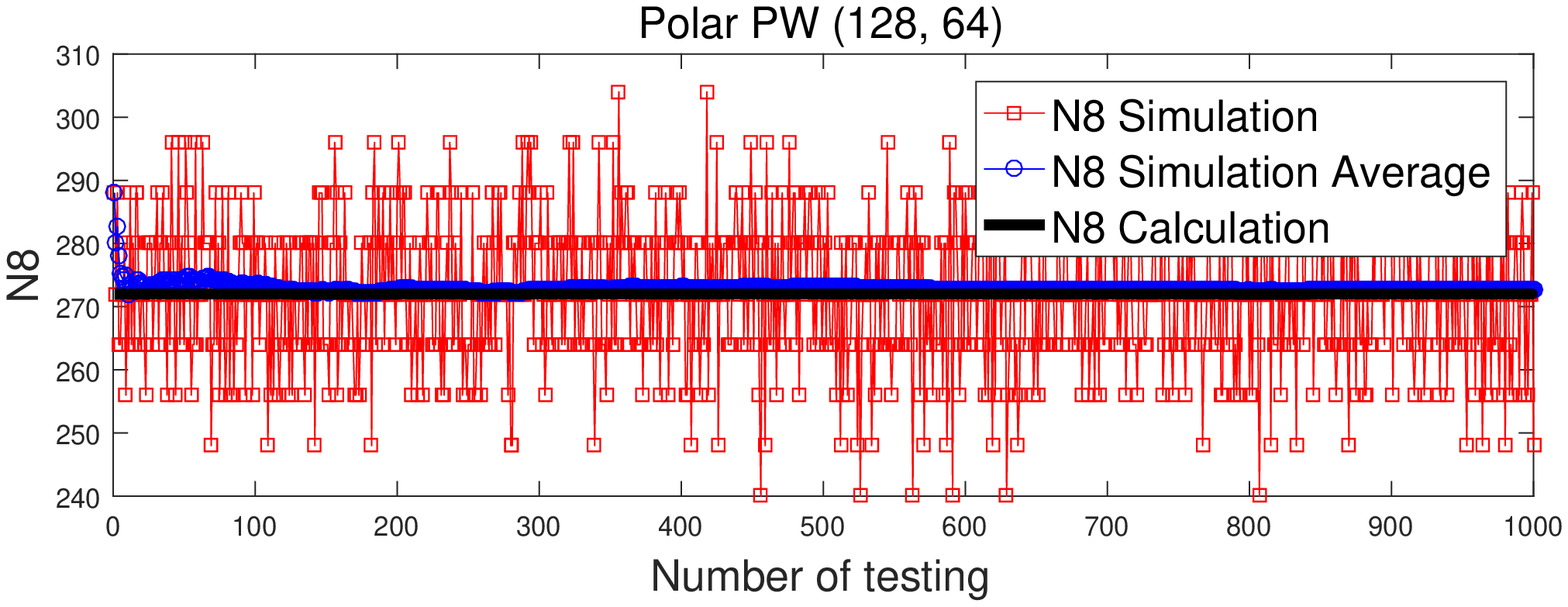}}
\caption{ PW(128,64), the black solid line is calculated, and the blue solid line is obtained from simulation.}
\label{fig2}
\end{figure}

\begin{figure}[htbp]
\centerline{\includegraphics[width = .5\textwidth]{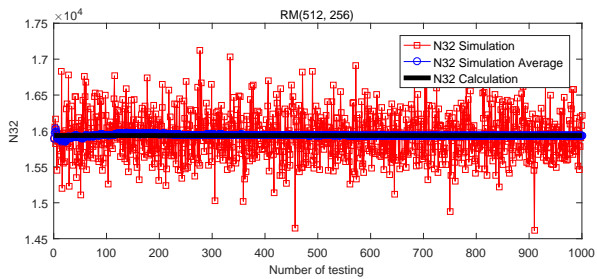}}
\caption{RM(512,256), the black solid line is calculated, and the blue solid line is obtained from simulation.}
\label{fig6}
\end{figure}

In Table.\;\ref{tab1}, we display the number of minimum codewords of the original RM/polar codes, and the average number is recursively calculated. It is shown that pre-transforming significantly reduces the number of minimum-weight code words, especially in RM(128, 64). The significant improvement of weight spectrum after pre-transformation explains why the CA-polar, PC-polar, and PAC codes outperform the original polar codes under list decoding with large list size.

\begin{table}[htbp]
\caption{comparsion between original polar codes and Pre-transfomed polar codes}
\begin{center}
\begin{tabular}{|c|c|c|c|}
\hline
\ \ &\multicolumn{3}{|c|}{\textbf{Minimum-weight codewords}} \\
\cline{2-4}
\ \  & $d_{min}$ &\textit{Original}& \textit{Pre-trasformed} \\
\hline
RM(128,64)&16&94488 &2767  \\
\hline
PW(128,64)&8&304 &272  \\
\hline
\end{tabular}
\label{tab1}
\end{center}
\end{table}

The improvement can be observed under different code lengths and rates, as we can see from Fig.\;\ref{fig3}. In all cases, pre-transformation reduces the number of minimum codewords significantly.

\begin{figure}[htbp]
\centerline{\includegraphics[width = .5\textwidth]{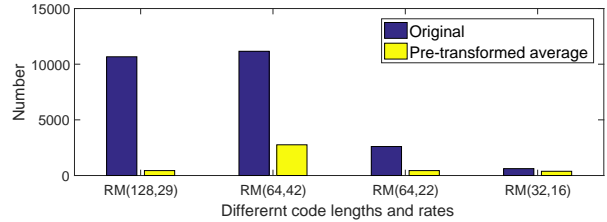}}
\caption{comparsion between original polar codes and pre-transformed polar codes under different code lengths and rates}
\label{fig3}
\end{figure}

In addition to minimum-weight codewords, we also simulate to verify the accuracy of the formula for other low-weight codewords. The simulation results are shown in Table.\;\ref{tab2} for RM(128, 64) and PW(128, 64) respectively, where
$N^{sim}$ is the simulation results, and $N^{recur}$ is the calculation results.

\begin{table}[htbp]
\caption{comparsion between simulation results of RM(128, 64) and PW(128, 64) with 500 realizations and the calculation results by the proposed recursive formula}
\begin{center}
\begin{tabular}{|c|c|c|c|c|c|}
\hline
\multicolumn{3}{|c|}{\textbf{RM(128,64)}}& \multicolumn{3}{|c|}{\textbf{PW(128,64)}} \\
\hline
$d$ & $N_d^{sim}$ & $N_d^{recur}$ & $d$ & $N_d^{sim}$ & $N_d^{recur}$ \\
\hline
16&2764.5&2766.9 &8&272.2&272  \\
\hline
18&397.1&393.5 &12&896.6&896  \\
\hline
20&80251 &80182&16&76812.2&77111  \\
\hline
\multicolumn{6}{l}{Note that $N_{10}=N_{14}=0$ for PW(128, 64)}
\end{tabular}
\label{tab2}
\end{center}
\end{table}

In PC-polar codes \cite{b4}, both reliability and code distance are taken into consideration when selecting the information set. A coefficient $\alpha$ is used to control the tradeoff between reliability and code distance. The larger $\alpha$ is, the greater code distance is. A parity check pattern can be considered as a realization of the pre-transformation matrix. Take PC-polar codes(128, 64) ($\alpha=1.5$) as an example, we calculate the average number of low-weight codewords. The result implies that pre-transformtion can increase the minimum code distance when the information set is properly chosen, that is, reducing the number of original minimum-codewords to zero. The number of low-weight codewords of the original code, a realization of the pre-transformed code and the code ensemble average are shown in Table.\;\ref{tab3}. In this case, although some rows of $F_N$ with Hamming weight 8 are selected into the information set, PC-polar codes can increase the minimum distance from 8 to 12. 

\begin{table}[htbp]
\caption{comparsion between the low-weight codewords number of the original codes, a realization of the pre-transformed codes and the ensemble average}
\begin{center}
\begin{tabular}{|c|c|c|c|}
\hline
Hamming weight &\multicolumn{3}{|c|}{$N_d$} \\
\cline{2-4}
 $d$ & Original & Pre-transformed & Average  \\
\hline
8  &32  &0 &0.5   \\
\hline
10  &0  &0  &0.0547  \\
\hline
12  &0  &48  &39.5   \\
\hline
14  &128  &28  &27   \\
\hline
16  &57048  &5228 &5250   \\
\hline
\end{tabular}
\label{tab3}
\end{center}
\end{table}  

In  CA-polar codes \cite{b3}, $r$ CRC bits are attached to $K$ information bits and all the $K'=K+r$ bits are fed into the polar encoder. To construct CA-polar codes, $K'$ indices are selected, and the first $K$ of them are information bits, while the others are dynamic frozen bits\cite{b19}. We construct  CA-polar code ($N=128, K=64, r=6$) by reliability sequence in \cite{b21}. The number of low-weight codewords of the original code, a CA-polar code with generator polynomial $g(D) = D^6+D+1$ and the code ensemble average are shown in Table.\;\ref{tab4}. 

\begin{table}[htbp]
\caption{comparsion between the low-weight codewords number of the original codes, a CA-polar code and the ensemble average}
\begin{center}
\begin{tabular}{|c|c|c|c|}
\hline
Hamming weight &\multicolumn{3}{|c|}{$N_d$} \\
\cline{2-4}
 $d$  & Original & CA & Average \\
\hline
8  &529  &4  &10.75  \\
\hline
10  &0  &0  &0.0547   \\
\hline
12  &0  &145  &85.5   \\
\hline
14  &0  &0  &27.07   \\
\hline
16  &3.364$\times 10^5$  &12550 &4952.4   \\
\hline
\end{tabular}
\label{tab4}
\end{center}
\end{table}  

Fig.\;\ref{fig4} and Fig.\;\ref{fig5} provide the BLER performances of various constructions under different list sizes, with reference to finite-length performance bounds such as normal approximation (NA), random-coding union (RCU) and meta-converse (MC) bounds \cite{b7} \cite{b16} \cite{b17}. PW pre, PC-Polar ($\alpha$ = 1.5), PC-Polar ($\alpha$ = 3.5) codes are specific realizations drawn from the code ensemble with different information set selections. The information sets of PAC and PC-Polar ($\alpha$ = 3.5) codes turn to be the same. In PAC codes, the transformation matrix $T$ is a upper-triangular Toeplitz matrix, while in PC-Polar codes ($\alpha$ = 3.5), $T$ is a randomly generated upper-triangular matrix. It is observed that reliability is the only contributing factor to decoding performance under SC decoding. Under SCL decoding with list size $L = 8$, the PC-polar codes ($\alpha = 1.5$) strike a good balance between reliability and distance, and shows the best decoding performance. When the list size is large enough, both PAC and PC-polar codes ($\alpha$ = 3.5) can approach NA bound with their ML performances.
\begin{figure}[htbp]
\centerline{\includegraphics[width = .5\textwidth]{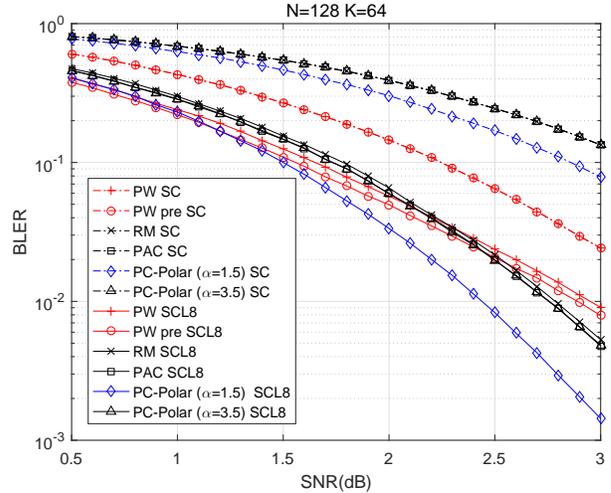}}
\caption{Performance comparison between different code constructions under SC and SCL, $L$=8}
\label{fig4}
\end{figure}        

\begin{figure}[htbp]
\centerline{\includegraphics[width = .5\textwidth]{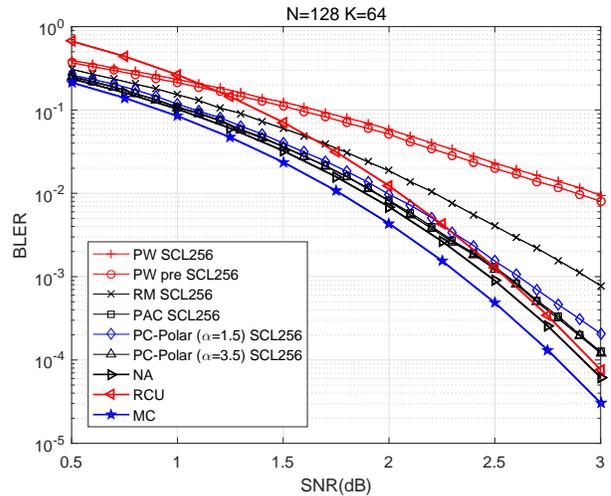}}
\caption{Performance comparison between different code constructions under SCL L=256, with reference to finite-length performance bounds}
\label{fig5}
\end{figure}

\section{Conclusion}
In this paper, we propose recursive formulas to efficiently calculate the average weight spectrum of pre-transformed polar codes, which include CA-polar, PC-polar and PAC codes as special cases. It is worth mentioning that our formulas work for any sub-channel selection criteria. We found that, with pre-transformation, the average number of minimum codewords decreases significantly, therefore outperforming the original RM/polar codes under the ML decoding and SCL decoding with large list sizes. Furthermore, as in the instance of PC-polar codes ($\alpha=1.5$), the combination of a proper sub-channel selection and pre-transformation has the potential to increase minimum code distance by eliminating minimum-weight codewords.

\newpage
\begin{appendix}
\emph{A. Proof of Theorem 1}
\begin{proof}
A trivial examination can prove the correctness of the boundary conditions. Let us focus on deriving the recursive formula.

\emph{Case 1}: $1 \leq i \leq 2^{m-1}$\\
$$g_{2^m}^{(i)}=f_{2^m}^{(i)} \oplus \sum_{j=i+1}^{2^{m-1}}T_{ij}f_{2^m}^{(j)} \oplus \sum_{j=2^{m-1}+1}^{2^m}T_{ij}f_{2^m}^{(j)}.$$
Let $f_{2^m}^{(i)} \oplus \sum\limits_{j=i+1}^{2^{m-1}}T_{ij}f_{2^m}^{(j)} \triangleq [\textbf{X},\textbf{0}]$, $\sum\limits_{j=2^{m-1}+1}^{2^m}T_{ij}f_{2^m}^{(j)} \triangleq [\textbf{Y},\textbf{Y}]$, where $\textbf{0}$ is an all-zero row vector of length $ 2^{m-1}$, $\textbf{X}=\left(x_1,\dots,x_{2^{m-1}}\right)$ , $\textbf{Y}=\left(y_1,\dots,y_{2^{m-1}}\right)$.

Apparently, \textbf{X} and \textbf{Y} are independent, and ${\forall} \ \textbf{a}=(a_1,\dots,a_{2^{m-1}}) \in \{0,1\}^{2^{m-1}}$, $P(\textbf{Y}=\textbf{a})=2^{-2^{m-1}}$. Let $w(\textbf{X})=d_1$, $w(\textbf{Y})=d_2$, and $c$ be the number of positions where \textbf{X} and \textbf{Y} are both 1. We have
\begin{align*}
w(g_{2^m}^{(i)})&=w([\textbf{X} \oplus \textbf{Y},\textbf{Y}])\\
&=w(\textbf{X} \oplus \textbf{Y})+w(\textbf{Y})\\
&=d_1+2d_2-2c.
\end{align*}

Because $d_2 \geq c$ and $d_1 \geq w(f_{2^m}^{(i)})$ \cite[Corollary 1]{b9}, the equation $w(g_{2^m}^{(i)})=d_1+2d_2-2c=w(f_{2^m}^{(i)})$ holds if and only if $d_1=w(f_{2^m}^{(i)})$, $d_2=c$. In fact, \{$d_2=c$\} denotes the event that \textbf{X} \emph {covers} the first half of \textbf{Y}, i.e., if $x_i=0$ then $y_i =0$, for all $1 \leq i \leq 2^{m-1}$. Let $\{i_1,\dots,i_{d_1}\}$ denote the $d_1$ locations where $x_{i_1},\dots,x_{i_{d_1}}=1$, hence the recursive formula is
\begin{align*}
P(m,i)&=P(d_1=w(f_{2^m}^{(i)}))*P(d_2=c|d_1=w(f_{2^m}^{(i)}))\\
&=P(m-1,i)*\\
&P\left(y_{i_1},\dots,y_{i_{d_1}} \in \{0,1\}^{d_1},y_i=0 \ otherwise \right)\\
&=P(m-1,i)*\frac{2^{d_1}}{2^{2^{m-1}}}\\
&=P(m-1,i)*\frac{2^{w(f_{2^m}^{(i)})}}{2^{2^{m-1}}}.
\end{align*}

\emph{Case 2}: $2^{m-1} < i \leq 2^m$
\begin{align}
\label{eq:7}
g_{2^m}^{(i)}&=\left[f_{2^m}^{(i)} \oplus \sum_{j=i+1}^{2^m}T_{ij}f_{2^m}^{(j)} \right] \notag \\
&=\bigg[f_{2^{m-1}}^{(i-2^{m-1})} \oplus  \sum_{j=i+1}^{2^m}T_{ij}f_{2^{m-1}}^{(j-2^{m-1})}, \notag \\
& \ \ \ \ f_{2^{m-1}}^{(i-2^{m-1})} \oplus \sum_{j=i+1}^{2^m}T_{ij}f_{2^{m-1}}^{(j-2^{m-1})} \bigg] \notag \\
&\backsim \left[g_{2^{m-1}}^{(i-2^{m-1})},g_{2^{m-1}}^{(i-2^{m-1})}\right],
\end{align}
where $X_1\backsim X_2$ means $X_1,X_2$ have the same distribution.
\end{proof}

\emph{B. Proof of Theorem 2}
\begin{proof}
(\ref{eq:6}) is obtained with the observation that $w\left(f_N^{(1)}\right)$ is odd and $\forall$ $i > 1$, $w\left(f_N^{(i)}\right)$ is even.

\emph{Case 1}: $1 \leq i \leq 2^{m-1}$

Similar to the proof of \textbf{Theorem 1}, let $w(\textbf{X})=d_1$, $w(\textbf{Y})=d_2$ and $c$ be the number of positions where \textbf{X} and \textbf{Y} are both 1. Denoted by $\mathcal{V}=\{v_1,\dots,v_c\}$  the set of positions where \textbf{X} and \textbf{Y} are both 1, and $\mathcal{V}^c$  its complement. Let $\textbf{Y}_{\mathcal{V}^c}$\ denote the corresponding subvector of \textbf{Y}, we have $w\left(\textbf{Y}_{\mathcal{V}^c}\right)=d_2-c$. Because
\begin{align*}
w(g_{2^m}^{(i)})&=w([\textbf{X} \oplus \textbf{Y},\textbf{Y}])\\
&=w(\textbf{X} \oplus \textbf{Y})+w(\textbf{Y})\\
&=d_1+2d_2-2c\\
&=d,
\end{align*}
then $d_2-c=\frac{d-d_1}{2}$, so $d-d_1$ must be even. No matter what $c$ is, the equation is satisfied if and only if $w\left(\textbf{Y}_{\mathcal{V}^c}\right)=\frac{d-d_1}{2}$. Based on the above observations, $P(m,i,d)$ can be formulated as
\begin{align*}
&P(m,i,d)\\
&=\sum\limits_{\mbox{\tiny$\begin{array}{c}
d'=w\left(f_{2^m}^{(i)}\right)\\
d-d' is \ even \end{array}$}}^{d}P(m,i,d|w\left(\textbf{X}\right)=d') \ast P(w\left(\textbf{X}\right)=d')\\
&=\sum\limits_{\mbox{\tiny$\begin{array}{c}
d'=w\left(f_{2^m}^{(i)}\right)\\
d-d' is \ even \end{array}$}}^{d}P(m,i,d|w\left(\textbf{X}\right)=d') \ast P(m-1,i,d').
\end{align*}
The last equality holds due to $\textbf{X}\backsim g_{2^{m-1}}^{(i)}$.

In particular
\begin{align*}
P(m,i,d|w\left(\textbf{X}\right)=d') &= P\left(w\left(\textbf{Y}_{\mathcal{V}^c}\right)=\frac{d-d_1}{2}\right)\\
&= \frac{2^{d'}\begin{pmatrix} 2^{m-1}-d' \\ \frac{d-d'}{2} \end{pmatrix}}{2^{2^{m-1}}}.
\end{align*}
Consequently, the recursive formula is
\begin{align*}
P(m,i,d)=\sum\limits_{\mbox{\tiny$\begin{array}{c}
d'=w\left(f_{2^m}^{(i)}\right)\\
d-d' is \ even \end{array}$}}^{d}P(m-1,i,d')\frac{2^{d'}\begin{pmatrix} 2^{m-1}-d' \\ \frac{d-d'}{2} \end{pmatrix}}{2^{2^{m-1}}}.
\end{align*}
\emph{Case 2}: $2^{m-1} < i \leq 2^m$, according to (\ref{eq:7})
\begin{align*}
g_{2^m}^{(i)}
&\backsim \left[g_{2^{m-1}}^{i-2^{m-1}},g_{2^{m-1}}^{i-2^{m-1}}\right].
\end{align*}
It is straightforward to obtain the recursive formula
$$P(m,i,d)=P(m-1,i-2^{m-1},d/2).$$

\end{proof}

\end{appendix}




\end{document}